\documentclass{dac}
\usepackage{graphicx} 
\usepackage{hyperref}
\usepackage{amssymb}
\usepackage{mathtools}
\usepackage{algorithm}
\usepackage{algpseudocode}
\usepackage{braket}
\usepackage{graphicx}
\usepackage{xcolor}
\usepackage{xspace}
\usepackage{hypbmsec}
\usepackage{hyperref}

\algtext*{EndWhile}
\algtext*{EndIf}

\DeclareMathOperator{\CountSuffixes}{CS}

\DeclareMathOperator{\emptyword}{\varepsilon}

\addbibresource{bib.bib}

\title{Ranking and Unranking k-subsequence universal words}

\author{Duncan Adamson\\
  \emph{Leverhulme Research Centre for Functional Materials Design, The University of Liverpool, UK\footnote{This work was completed at, and partially funded by the University of G\"ottingen. 4}}\\
  \texttt{d.a.adamson@liverpool.ac.uk}} 

\date{}
%
%
%

\begin{document}
\maketitle

\begin{abstract}
A subsequence of a word $w$ is a word $u$ such that $u = w[i_1] w[i_2] , \dots w[i_{|u|}]$, for some set of indices $1 \leq i_1 < i_2 < \dots < i_k \leq |w|$.
A word $w$ is $k$-subsequence universal over an alphabet $\Sigma$ if every word in $\Sigma^k$ appears in $w$ as a subsequence.
In this paper, we provide new algorithms for $k$-subsequence universal words of fixed length $n$ over the alphabet $\Sigma = \{1,2,\dots, \sigma\}$.
Letting $\mathcal{U}(n,k,\sigma)$ denote the set of $n$-length $k$-subsequence universal words over $\Sigma$, we provide:
\begin{itemize}
	\item an $O(n k \sigma)$ time algorithm for counting the size of $\mathcal{U}(n,k,\sigma)$;
	\item an $O(n k \sigma)$ time algorithm for ranking words in the set $\mathcal{U}(n,k,\sigma)$;
	\item an $O(n k \sigma)$ time algorithm for unranking words from the set $\mathcal{U}(n,k,\sigma)$;
	\item an algorithm for enumerating the set $\mathcal{U}(n,k,\sigma)$ with $O(n \sigma)$ delay after $O(n k \sigma)$ preprocessing.
\end{itemize}
\end{abstract}


\section{Introduction}
\label{sec:intro}

Words and subsequences are two fundamental combinatorial objects.
Informally, a subsequence of a word $w$ is a word $u$ that can be found by deleting some subset of the symbols $w$.
Subsequences are a heavily studied object within computer science \cite{barker2020scattered,DaySubsequenceUniversality2021,fleischmann2022nearly,halfon2017decidability,AbsentSubsequences,lothaire1997combinatorics,mateescu2004subword,simon2003words,tronicek2003common,zetzsche2016complexity} and beyond, with applications in a wide number of fields including bioinformatics \cite{han2020novel,shikder2019openmp}, database theory \cite{artikis2017complex}, and modelling concurrency \cite{shaw1978software}.
A recent survey of subsequence algorithms has been provided by Kosche et al. \cite{Kosche2022SubsequenceSurvey}, highlighting major results for problems on funding subsequences in words.

This paper considers \emph{$k$-subsequence universal} words.
A word $w$ is $k$-subsequence universal over an alphabet $\Sigma$ if $w$ contains every word of length $k$ over $\Sigma$ as a subsequence. 
These words were first defined by Karandikar and Schnoebelen \cite{karandikar2016height,schnoebelen2019height} as \emph{$k$-rich words}, however more recent work has used the term \emph{$k$-subsequence universality} \cite{barker2020scattered,DaySubsequenceUniversality2021,AbsentSubsequences}, which we will use here.
The study of these words follows from work on \emph{Simon’s congruence} \cite{simon1975piecewise}.
Informally, two words $w,v$ are $k$-congruent if $w$ and $v$ share the same set of subsequences of length $k$.
This relationship has been heavily studied \cite{fleischer2018testing,simon2003words,tronicek2003common,zetzsche2016complexity}, with a recent asymptotically optimal algorithm derived for testing if two words are $k$-congruent \cite{gawrychowski2021simons}.

Most relevant to this work are the papers by Barker et al. \cite{barker2020scattered}, and Day et al. \cite{DaySubsequenceUniversality2021}, directly addressing $k$-subsequence universal words.
In \cite{barker2020scattered}, the authors show that it is possible to determine, in linear time, if a word is $k$-subsequence universal or not, as well as the shortest $k$-subsequence universal prefix of a given word.
Additionally, they provide results showing that the minimal set of $\ell$-factors of a word $w$, $w_1 w_2 \dots, w_{\ell}$ such that $w_1 w_2 \dots w_{\ell}$ is $k$-subsequence universal, and the index $i$ such that $w^i$ is $k$-subsequence universal can be determined efficiently.

This is built on by \cite{DaySubsequenceUniversality2021}, in which the authors provide a set of algorithmic results for minimising the number of edit operations to transform a word into a $k$-subsequence universal word, providing results on \emph{insertions}, \emph{deletions}, and \emph{substitutions}.
They show that the minimum number of \emph{insertions} and \emph{substitutions} needed to transform a word $w$ into a $k$-subsequence universal word $w'$ can be done in $O(n k)$ time, assuming that $k < n$.
Additionally, they show that the number of \emph{deletions} needed to reduce the universality index (the maximum $k$ such that the word is $k$-subsequence universal) of a word to $k$ can be determined in $O(n k)$ time.

This paper is interested in providing algorithms for some of the basic operations on classes of words, \emph{counting}, \emph{ranking}, \emph{unranking}, and \emph{enumerating} for the class of $k$-subsequence universal words.
In providing these algorithms, we aim to expand the understanding of the space of $k$-subsequence universal words of a fixed length $n$.
We use $\mathcal{U}(n,k,\sigma)$ to denote the set of $k$-subsequence universal words of length $n$ over an alphabet of size $\sigma$ (assumed to be the alphabet $\{1,2,\dots,\sigma$\}).
The counting problem asks for the number of words in a given class.
The ranking problem takes as input a word $w$ and determines the number of words within the set which are lexicographically smaller than $w$.
The unranking problem is the inverse of the ranking problem, taking a rank $i$ and asking for the word in the set with the rank $i$.
Finally, the enumeration problem asks for the explicit outputting of every word within the set in some fixed order.
Each of these problems has been heavily studied for other classes of words, including cyclic words \cite{adamson2022ranking,Adamson2021,Fredricksen1978,gilbert1961symmetry,Kociumaka2014,Sawada2017} and Gray codes \cite{Fredricksen1978,Kociumaka2014,savage1997survey}.

\paragraph*{Our Results.}
This paper builds upon the existing body of work on $k$-subsequence universal words to build a stronger understanding of the space of $k$-subsequence universal words of a fixed length.
We provide a suite of algorithmic results for $k$-subsequence universal words of fixed length $n$.
We denote the set of all $k$-subsequence universal words over the alphabet $1,2,\dots,\sigma$ with length $n$ by $\mathcal{U}(n,k,\sigma)$.
In Section \ref{sec:counting}, we provide an an $O(n k \sigma)$ time algorithm for counting the size of $\mathcal{U}(n,k,\sigma)$.
In Section \ref{sec:ranking}, we use the observations from this counting algorithm to provide an $O(n k \sigma)$ time algorithm for ranking words in the set $\mathcal{U}(n,k,\sigma)$.
Finally, in Section \ref{sec:unranking} we provide an $O(n \sigma)$ time algorithm for unranking within the set $\mathcal{U}(n,k,\sigma)$, with $O(n k \sigma)$ time preprocessing.
We note this unranking algorithm directly provides an enumeration algorithm for the set $\mathcal{U}(n,k,\sigma)$ with $O(n \sigma)$ delay.

\paragraph*{Computational Model.}
In this paper, we assume the unit cost RAM computational model, in this case, equivalent to the unit cost word-RAM with word size $O(\log(N)\log(\sigma))$, where $N$ is the larger of the input or the output.
We note that this remains logarithmic relative to the number of $k$-subsequence universal words of length $n$, and thus the bits required to output the integer representation of the number of such words.
All our complexities can be readjusted into the unit cost RAM computational model with word size $O( \log(n) \log(\sigma))$ where $n$ is the size of the input, by applying a multiplicative factor of $O(n/\log (n))$ to the stated bounds.
We avoid a factor of $O(n^2/\log^2 (n))$ by noting that these algorithms only perform multiplications where at least one integer has size at most $\sigma$ or addition between integers of size at most $\sigma^n$.

\section{Preliminaries}
\label{sec:prelims}

We use the following notation.
Given a pair of natural numbers $m,n \in \mathbb{N}$, the notation $[m,n]$ denotes the ordered set $\{m, m+1, \dots, n \}$, or the empty set if $m > n$.
A word $w$ is an ordered sequence of symbols over some alphabet $\Sigma$.
The set of words of length $n$ over the alphabet $\Sigma$ is denoted $\Sigma^n$, and the set of all words over the alphabet $\Sigma$ by $\Sigma^*$.
The length of a word $w$ is denoted $|w|$.
The notation $w[i]$ is used to denote the $i^{th}$ symbol in the word $w$, and $w[i,j]$ is used to denote the contiguous sequence within $w$ corresponding to the word $w[i] w[i + 1] \dots w[j]$ (or the empty word $\emptyword$ if $i > j$).
A word $v$ is a \emph{factor} of a word $w$ if there exists some pair of indices $i,j$ such that $v = w[i,j]$.

We assume the alphabet $\Sigma = [1, \sigma]$ for some natural number $\sigma \geq 1$.
Given two words $w,v \in \Sigma^n$, $w$ is \emph{lexicographically smaller} than $v$ if there exists some index $i \in [1,n]$ such that $w[1,i - 1] = v[1,i - 1]$ and $w[i] < v[i]$.
Given two words $w,v \in \Sigma^*$, $v$ is a subsequence of $w$ if and only if there exists some series of indices $1 \leq i_1 < i_2 < \dots < i_{|v|} \leq |w|$ such that $v = w[i_1] w[i_2] \dots w[i_{|v|}]$.

\begin{definition}[$k$-subsequence universality]
    \label{def:k_universality}
    A word $w\in\Sigma^n$ is $k$-subsequence universal if and only if every word $v \in \Sigma^k$ is a subsequence of $w$.
    The set of words of length $n$ that are $k$-subsequence universal over an alphabet of size $\sigma$ is denoted $\mathcal{U}(n,k,\sigma)$.
\end{definition}


The \emph{subsequence universality index} of a word $w$ is the largest value $k$ such that $w$ is $k$-subsequence universal.
In order to determine if a word is $k$-subsequence universal, we use \emph{arch-factorisations}, first introduced by Hebard \cite{hebrard1991algorithm}.
Informally, an \emph{arch} of a word is a minimal length factor containing each symbol in the alphabet $\Sigma$ at least once.
For the remainder of this paper, we use the following formal definition:

\begin{definition}[Arches]
    \label{def:arch}
    An \emph{Arch} over the alphabet $\Sigma = \{1,2,\dots, \sigma\}$ is a word containing every symbol in $\Sigma$ at least once.
    The \emph{universal subsequence} of an arch $v$ is the set of indices $(i_1, i_2, \dots, i_{\sigma})$ satisfying the following:
    \begin{itemize}
        \item $v[i_1], v[i_2], \dots, v[i_{\sigma}]$ contains every symbol in $\Sigma$ exactly once.
        \item The index $i_j$ is the first position in $v$ where the symbol $v[i_j]$ appears.
        \item The index $i_{\sigma}$ is the last position in $v$.
    \end{itemize}
    Any symbol not in the universal subsequence is called a \emph{free symbol}.
\end{definition}

We note that this definition of an arch corresponds to a $1$-subsequence universal word where the last symbol is unique, i.e. it does not appear anywhere else in the word.

\begin{definition}[\cite{hebrard1991algorithm}, Arch Factorisations]
    \label{def:arches}
    The arch-factorisation of a word $w \in \Sigma^n$ with a universality index of $k$, is a set of factors $\{w_1, w_2, \dots, w_k, v\}$, denoted $Arch(w)$, such that $w_i$ is a factor of $w$ that is an arch for every $i \in [1,k]$, $v$ is a suffix of $w$ that does not contain any arch as a factor, and $w_1 w_2 \dots w_k v = w$.
\end{definition}

\begin{figure}
    \centering
    \begin{align*}
        &w = {\color{red}11234},{\color{blue}4321},{\color{green}22314},{\color{orange} 33214}, 4\\
        &v = {\color{red} 12234}, {\color{blue} 323134}, {\color{green} 11234}, 4412
    \end{align*}
    \caption{An example of the arch-factorisation of two words $w,v \in \Sigma^{20}$ where $\Sigma = \{1,2,3,4\}$.
    Each arch (or the ending suffix) is separated by a comma, and highlighted in a seperate colour.
    Note that $w$ is $4$-subsequence universal while $v$ is only $3$-subsequence universal, despite sharing the same Parikh vector $(5,5,5,5)$.}
    \label{fig:factorisation}
\end{figure}

An example of this factorisation is given in Figure \ref{fig:factorisation}.
Day et al. \cite{DaySubsequenceUniversality2021} expanded upon Definition \ref{def:arches} to show that a word is $k$-subsequence universal if and only if there exists an arch-factorisation of $w$ containing at least $k$-arches, and further, that such a factorisation can be computed in time linear to the length of the word.

\begin{theorem}[\cite{DaySubsequenceUniversality2021}]
    \label{thm:arch}
    A word $w \in \Sigma^n$ is $k$-subsequence universal over $\Sigma$ if and only if $Arch(w)$ contains at least $k$ arches. Further, $Arch(w)$ can be computed in $O(n)$ time.
\end{theorem}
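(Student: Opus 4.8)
The plan is to prove the two assertions in turn, working throughout with the \emph{greedy} arch factorisation of $w$: set $p_0 = 0$ and, for $i \ge 1$, let $p_i$ be the smallest index (if any) with $w[p_{i-1}+1, p_i]$ containing every symbol of $\Sigma$; let $m$ be the number of arches this produces and $v = w[p_m + 1, n]$. By construction $v$ contains no arch as a factor, so $\{w[1,p_1], w[p_1+1,p_2], \dots, w[p_{m-1}+1,p_m], v\}$ is an arch factorisation in the sense of Definition~\ref{def:arches}, and it suffices to show that $w$ is $k$-subsequence universal precisely when $m \ge k$. I will also use the elementary monotonicity fact that $\mathcal{U}(n, k, \sigma) \subseteq \mathcal{U}(n, k-1, \sigma)$, since any word of length $k-1$ extends to one of length $k$.

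For the ``if'' direction, assume $m \ge k$ and let $u = u_1 u_2 \cdots u_k \in \Sigma^k$ be arbitrary. Each arch $w[p_{i-1}+1, p_i]$ contains an occurrence of $u_i$, and because the arches occupy disjoint, consecutively ordered blocks of positions, these occurrences form a strictly increasing index sequence; hence $u$ is a subsequence of $w[1, p_k]$ and so of $w$. Thus $w \in \mathcal{U}(n,k,\sigma)$.

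For the ``only if'' direction I would prove the contrapositive: if $m < k$ then $w \notin \mathcal{U}(n, m+1, \sigma)$, which by monotonicity gives $w \notin \mathcal{U}(n,k,\sigma)$. Put $b_i = w[p_i]$, the symbol \emph{completing} the $i$-th arch; minimality of $p_i$ forces $w[p_{i-1}+1,\, p_i - 1]$ to miss exactly the symbol $b_i$, so the first occurrence of $b_i$ at or after position $p_{i-1}+1$ is at $p_i$. Writing $f(i)$ for the length of the shortest prefix of $w$ having $b_1 b_2 \cdots b_i$ as a subsequence, an induction on $i$ yields $f(i) = p_i$: the inequality $f(i) \le p_i$ is witnessed by matching $b_\ell$ at $p_\ell$, while for $f(i) \ge p_i$ the occurrence of $b_i$ in any embedding sits at some $j \ge f(i-1) + 1 = p_{i-1}+1$, which by the previous sentence forces $j \ge p_i$. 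Now choose a symbol $c$ not occurring in $v$ (possible since $v$ is arch-free) and set $u = b_1 b_2 \cdots b_m c \in \Sigma^{m+1}$. If $u$ were a subsequence of $w$, then $b_1 \cdots b_m$ would embed into $w[1, j-1]$, where $j$ is the position matching $c$, forcing $j - 1 \ge f(m) = p_m$; but then $w[j] = c$ lies in $v = w[p_m+1, n]$, contradicting the choice of $c$.

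For the linear-time bound, a single left-to-right scan of $w$ suffices: keep a boolean array indexed by $\Sigma$ together with a counter of how many distinct symbols have appeared since the start of the current arch, and whenever the counter reaches $\sigma$ record the current position as an arch boundary and reinitialise. Each arch has length at least $\sigma$, so there are at most $\lceil n/\sigma \rceil$ arches and at most that many reinitialisations, each costing $O(\sigma)$; together with the $O(n)$ scan this is $O(n)$ overall (a timestamping trick on the array removes the reinitialisations if one prefers). The main obstacle is the ``only if'' direction --- specifically, isolating the witness $b_1 \cdots b_m c$ and establishing $f(i) = p_i$; the delicate point there is that a completing symbol $b_i$ may reappear as a free symbol inside later arches, so the induction must reason about \emph{first occurrences measured from the preceding arch boundary} rather than about occurrences of $b_i$ in $w$ globally.
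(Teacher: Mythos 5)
The paper offers no proof of Theorem~\ref{thm:arch}; it is imported by citation from Day et al.\ \cite{DaySubsequenceUniversality2021}, so there is no in-paper argument to compare against. Your proof is correct and follows the standard route for this result: the greedy factorisation and the observation that consecutive arches give disjoint increasing embeddings handle the ``if'' direction, and the witness word $b_1\cdots b_m c$ built from the arch-completing symbols together with the induction $f(i)=p_i$ (correctly localised to first occurrences after the preceding arch boundary) handles the ``only if'' direction; the linear-time scan is also the standard one.
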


In this paper, we use the following technical Lemma from \cite{DaySubsequenceUniversality2021}.

\begin{lemma}[\cite{DaySubsequenceUniversality2021}]
	\label{lem:alph_counting}
	Let $\Delta(w, i, j)$ denote the number unique symbols in $w[i,j] $ for some $w \in \Sigma^n$.
	We can compute in $O(n)$ the values of $\Delta(w,1, j)$ for every $j \in [1,n]$. 
\end{lemma}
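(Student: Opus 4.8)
The plan is to perform a single left-to-right sweep over $w$ while maintaining the set of symbols encountered so far. Concretely, I would keep a Boolean array $\mathrm{seen}[1..\sigma]$, initialised to \texttt{false}, together with a running counter $c$ initialised to $0$. Processing the positions $j = 1, 2, \dots, n$ in increasing order, at step $j$ I test whether $\mathrm{seen}[w[j]]$ is \texttt{false}; if so, I set it to \texttt{true} and increment $c$. In either case I then record $\Delta(w,1,j) \gets c$ and continue to the next position. At the end, all $n$ values $\Delta(w,1,1), \dots, \Delta(w,1,n)$ have been produced.

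Correctness follows from the invariant that, immediately after step $j$, the set $\{\, s \in \Sigma : \mathrm{seen}[s] = \texttt{true}\,\}$ is exactly the set of distinct symbols occurring in the prefix $w[1,j]$, and $c$ equals its cardinality. The invariant holds vacuously before step $1$, since $w[1,0] = \emptyword$. It is preserved by step $j$ because $w[j]$ is the only symbol whose membership in the prefix can change when the prefix grows from $w[1,j-1]$ to $w[1,j]$, and $w[j]$ is a \emph{new} distinct symbol precisely when it was not already marked \texttt{true}. Hence the recorded value satisfies $\Delta(w,1,j) = c = |\{w[1], w[2], \dots, w[j]\}|$, which is exactly the number of distinct symbols in $w[1,j]$.

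For the running time, each of the $n$ steps performs a constant number of array accesses, comparisons, and an increment, so the sweep itself costs $O(n)$. The only point needing care is the $O(\sigma)$ cost of initialising $\mathrm{seen}$: in the regime relevant to this paper one may assume $\sigma \le n$, since a $k$-subsequence universal word with $k \ge 1$ must contain every symbol of $\Sigma$ and hence $\mathcal{U}(n,k,\sigma)$ is otherwise empty, so this cost is absorbed into $O(n)$. Alternatively, to obtain the bound in full generality one applies the standard constant-time array-initialisation technique (maintaining a list of the cells that have actually been written, together with a validity check), which avoids touching all of $\mathrm{seen}$ up front. Either way the total is $O(n)$, and there is no substantial obstacle beyond carefully stating and maintaining the invariant.
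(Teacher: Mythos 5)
Your proof is correct; the paper states this lemma as a cited result from Day et al.\ \cite{DaySubsequenceUniversality2021} without reproducing a proof, and your single left-to-right sweep with a $\mathrm{seen}$ array and running counter is exactly the standard argument one would expect here. Your attention to the $O(\sigma)$ initialisation cost --- observing that $\sigma \le n$ in the regime where $\mathcal{U}(n,k,\sigma)$ is nonempty, or falling back on lazy array initialisation --- is an appropriate and complete way to close the only subtle point.
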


We combine Theorem \ref{thm:arch} and Lemma \ref{lem:alph_counting} to make the following observation.

\begin{observation}
	\label{obs:alph_and_arches}
	Let $w \in \Sigma^n$ be a $k$-subsequence universal word with the arch-decomposition $w_1, w_2, \dots, w_k, v$, and further let $\mathcal{A} = \{ A_1, A_2, \dots, A_k\}$ denote the set of indices where $A_{\ell} = 1 + \sum_{i \in [1,\ell - 1]} |w_i|$, i.e. the set of indices in $w$ corresponding to the first position of an arch in $Arch(w)$.
	Then, the values of $\Delta(w, A_{\ell}, i_{\ell})$ can be computed in $O(n)$ time for every $\ell \in [k]$ and  $i_{\ell} \in [A_{\ell},A_{\ell + 1} - 1]$, where $\Delta(w, A_{\ell}, i_{\ell})$ denotes the number unique symbols in $w[A_{\ell},i_{\ell}]$.
\end{observation}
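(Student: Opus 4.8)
The plan is to observe that $\Delta(w, A_\ell, i_\ell)$ is nothing more than the ``number of distinct symbols in a prefix'' statistic of Lemma~\ref{lem:alph_counting}, only measured inside the arch factor $w_\ell = w[A_\ell, A_{\ell+1}-1]$ instead of inside $w$ itself, where we adopt the convention $A_{k+1} = A_k + |w_k|$ (the first index of the suffix $v$). So the whole task reduces to: compute $Arch(w)$, and then run a distinct-prefix-count computation, restarting it at each arch boundary.

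First I would invoke Theorem~\ref{thm:arch} to compute $Arch(w)$, and hence the boundary set $\mathcal{A} = \{A_1, \dots, A_k\}$ together with the lengths $|w_1|, \dots, |w_k|$ and the index $A_{k+1}$, all in $O(n)$ time. Next I would compute the requested values by a single left-to-right sweep over $w[1, A_{k+1}-1]$, maintaining an integer array $\mathit{last}$ of size $\sigma$, where $\mathit{last}[c]$ records the index of the arch in which the symbol $c$ was most recently seen (initialised to $0$). When processing a position $i$ that lies in arch $\ell$, the symbol $w[i]$ is new to the current arch precisely when $\mathit{last}[w[i]] \neq \ell$; in that case we set $\Delta(w, A_\ell, i) = \Delta(w, A_\ell, i-1) + 1$ and update $\mathit{last}[w[i]] = \ell$, and otherwise $\Delta(w, A_\ell, i) = \Delta(w, A_\ell, i-1)$, with the convention $\Delta(w, A_\ell, A_\ell - 1) = 0$. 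Each position is touched once and each step is $O(1)$, so the sweep runs in $O(n)$ time, and crucially no explicit clearing of $\mathit{last}$ between arches is required.

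Correctness is immediate from the definition of $\Delta$ and from the fact that the arches, together with $v$, partition $w$: within a fixed arch $\ell$, the comparison $\mathit{last}[w[i]] \neq \ell$ holds exactly for the first occurrence of $w[i]$ among the positions $A_\ell, A_\ell+1, \dots, i$, which is precisely the increment condition for the running count of distinct symbols of $w[A_\ell, i]$. I would also note the alternative of simply applying the algorithm underlying Lemma~\ref{lem:alph_counting} verbatim to each factor $w_\ell$ in turn: since $\sum_{\ell \in [k]} |w_\ell| \le n$ this costs $O(n)$ for the sweeps, and clearing the size-$\sigma$ auxiliary array once per arch adds only $O(k\sigma)$, which is $O(n)$ because every arch has length at least $\sigma$ and hence $k\sigma \le n$.

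The only delicate point — and the closest thing to an obstacle — is the bookkeeping at arch boundaries, namely making sure that the count is reset to start from $A_\ell$ at the beginning of arch $\ell$ without paying $\Theta(\sigma)$ at each of the $k$ boundaries. The timestamp-style $\mathit{last}$ array above sidesteps this entirely, and the $k\sigma \le n$ bound shows that even the naive reset is affordable; everything else is routine.
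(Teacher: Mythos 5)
Your proposal is correct and matches the paper's (implicit) argument: the paper offers no detailed proof, simply stating that the observation follows by combining Theorem~\ref{thm:arch} (compute $Arch(w)$ in $O(n)$ time) with Lemma~\ref{lem:alph_counting} (distinct-prefix counts in $O(n)$ time), applied per arch. Your timestamp array and the $k\sigma \le n$ remark are welcome implementation details, but the route is the same.
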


Using this notation, we provide a formal definition of the ranking and unranking problems as considered in this paper.
The \emph{rank} of a word $w$ within an ordered set of words $\mathcal{S}$ is the number of words in $\mathcal{S}$ that are smaller than $w$ under the ordering of the set.
In this paper, we assume that the set of $k$-subsequence universal words is ordered lexicographically, and therefore the rank of a word $w$ in the set $\mathcal{U}(n,k,\sigma)$ is the number of words in $\mathcal{U}(n,k,\sigma)$ that are lexicographically smaller than $w$.
The \emph{ranking} problem takes as input a word $w$ integer triple $n,k,\sigma \in \mathbb{N}$ such that $n \geq k \sigma$, and returns the number of words in $\mathcal{U}(n,k,\sigma)$ lexicographically smaller than $w$.
The \emph{unranking} problem is conceptually the inverse of the ranking problem.
Given an integer $i \in [1, |\mathcal{S}|]$, the unranking problem asks for the word in $\mathcal{S}$ with a rank of $i$.
In this paper, the unranking problem takes as input a rank $i$ and integer triple $n,k,\sigma \in \mathbb{N}$ such that $n \geq k \sigma$, and returns the word in $\mathcal{U}(n,k,\sigma)$ with a rank of $i$.

\section{Counting Arches and $k$-subsequence universal words}
\label{sec:counting}

First, we present a tool for counting the size of $\mathcal{U}(n,k,\sigma)$, i.e. number of $k$-subsequence universal words of length $n$ over the alphabet $\Sigma = \{1,2, \dots, \sigma\}$. 
As well as being an interesting result in and of itself, this provides the foundation for our tools for both ranking and unranking.

This section is split into two sections.
First, we provide formulae for counting the number of arches, $0$-subsequence universal words, and 1-universal words of length $n$ over an alphabet of size $\sigma$.
Second, we provide a recursive technique to count the number of $k$-subsequence universal words of length $n$ over an alphabet of size $\sigma$.

%
%

\subsection{Arches, $0$-subsequence universal and 1-subsequence universal words}

We first consider how to count the number of arches, $0$-subsequence universal and 1-subsequence universal word of length $n$.
We note that these three special cases are closely interlinked.
First, note that any word that is not $0$-subsequence universal must be at least $1$-subsequence universal.
Therefore, the number of $1$-subsequence universal words is equal to the number of words minus the number of $0$-subsequence universal words.
Similarly, the number of $n$-length arches is equal to the number of $(n - 1)$-length $1$-subsequence universal words over an alphabet of size $\sigma - 1$, multiplied by $\sigma$.
We start with $0$-subsequence universal words.

\begin{lemma}
\label{lem:0_universal}
The number of $n$-length $0$-subsequence universal words over an alphabet $\Sigma = \{1,2, \dots, \sigma\}$ is given by:

$$\sum\limits_{i \in [1,\sigma]} (-1)^{i + 1} \genfrac(){0pt}{0}{\sigma}{i} (\sigma - i)^n.$$
\end{lemma}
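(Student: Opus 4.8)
The plan is to count $n$-length words over $\Sigma=\{1,2,\dots,\sigma\}$ that are \emph{not} $1$-subsequence universal, i.e.\ words that miss at least one symbol of $\Sigma$, via inclusion–exclusion over the set of missing symbols. A word $w\in\Sigma^n$ is $0$-subsequence universal precisely when there is some symbol $a\in\Sigma$ not occurring in $w$ (recall the empty word $\emptyword$ is trivially a subsequence of every word, so $\mathcal{U}(n,0,\sigma)$ is exactly the set of words of universality index $0$, i.e.\ those missing a letter). For a fixed nonempty subset $T\subseteq\Sigma$ of size $i$, the number of $n$-length words using only symbols from $\Sigma\setminus T$ is $(\sigma-i)^n$, and there are $\binom{\sigma}{i}$ such subsets.

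Concretely, for each symbol $a\in\Sigma$ let $M_a$ be the set of $w\in\Sigma^n$ in which $a$ does not appear. Then the set of $0$-subsequence universal words is $\bigcup_{a\in\Sigma} M_a$, and for any nonempty $T\subseteq\Sigma$ we have $\bigl|\bigcap_{a\in T} M_a\bigr| = (\sigma-|T|)^n$, since such words are exactly the $n$-length words over the $(\sigma-|T|)$-letter alphabet $\Sigma\setminus T$. Applying the inclusion–exclusion principle,
$$
\Bigl|\bigcup_{a\in\Sigma} M_a\Bigr| = \sum_{\emptyset\neq T\subseteq\Sigma} (-1)^{|T|+1}\Bigl|\bigcap_{a\in T} M_a\Bigr| = \sum_{i=1}^{\sigma} (-1)^{i+1}\binom{\sigma}{i}(\sigma-i)^n,
$$
where the last equality groups the subsets $T$ by their common size $i=|T|$, of which there are $\binom{\sigma}{i}$, each contributing $(\sigma-i)^n$. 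This is exactly the claimed formula.

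There is no real obstacle here — the only point needing a sentence of care is the identification of ``$0$-subsequence universal'' with ``not $1$-subsequence universal'', which follows from Definition~\ref{def:k_universality} together with the fact that the universality index of a word is well defined and a word with universality index $0$ is exactly one that fails to contain some single letter. (One may also note the $i=\sigma$ term contributes $(-1)^{\sigma+1}\cdot 0^n = 0$ for $n\geq 1$, so it is harmless.) I would present the proof in the two short paragraphs above: first set up the sets $M_a$ and evaluate the intersections, then invoke inclusion–exclusion and collapse the sum by subset size.
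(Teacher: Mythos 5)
Your proof is correct and takes essentially the same approach as the paper: both arguments are an inclusion--exclusion over the symbols missing from the word (equivalently, over proper sub-alphabets), with the $\binom{\sigma}{i}(\sigma-i)^n$ terms arising from counting words over alphabets with $i$ symbols removed. Your formulation via the sets $M_a$ is in fact a cleaner and more explicit rendering of the paper's overcounting argument, and your remark identifying ``$0$-subsequence universal'' with ``universality index $0$'' (rather than the vacuous reading of Definition~\ref{def:k_universality} at $k=0$) correctly pins down the convention the paper is implicitly using.
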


\begin{proof}
    Let $\varsigma \subseteq \Sigma$ be a $i$-length alphabet.
    Note first that the number of $n$-length words over $\varsigma$ is given by $i^{n}$, and further, as there are $\genfrac(){0pt}{2}{\sigma}{i}$ such alphabets, the total number of words over \emph{any} $i$-length alphabet is given by $\genfrac(){0pt}{0}{\sigma}{i} i^n$.
    Observe that any string in $\varsigma^n$ is also in $(\varsigma \cup \{x\})^n$, for some $x \in \Sigma \setminus \varsigma$, and further, there are $\sigma - i$ such $i + 1$-length alphabets containing every symbol in $\varsigma$.
    More generally, there are $\genfrac(){0pt}{2}{\sigma - i}{j}$ alphabets of size $i + j$ containing every symbol in the $i$-length alphabet $\varsigma$.
    Therefore, taking the sum of $n$-length words in all $i$-length alphabets, given by $\genfrac(){0pt}{0}{\sigma}{i} (i)^n$, will also count every word in a $j < i$-length language $\genfrac(){0pt}{0}{\sigma - j}{i - j}$ times.
    Combining this with the well-known binomial coefficient identities gives the equation for the total number of unique words in any alphabet in the set $\{\Sigma \setminus \{x\} \mid x \in \Sigma\}$ as:

    $$\genfrac(){0pt}{0}{\sigma}{1} (\sigma - 1)^{n - 1} - \genfrac(){0pt}{0}{\sigma}{2} (\sigma - 2)^{n - 1}+ \genfrac(){0pt}{0}{\sigma}{3} (\sigma - 3)^{n - 1} \dots (-1)^{\sigma + 1}$$
    $$= \sum\limits_{i \in [1,\sigma]} (-1)^{i + 1} \genfrac(){0pt}{0}{\sigma}{i} (\sigma - i)^n$$
\end{proof}

Using Lemma \ref{lem:0_universal}, the counting of $n$-length arches and $1$-subsequence universal words follows directly.

\begin{corollary}
    \label{col:1_universal}
    The number of $n$-length $1$-universal words over an alphabet $\Sigma = \{1,2, \dots, \sigma\}$ is given by:
    $$\sigma^n - \sum\limits_{i \in [1,\sigma]} (-1)^{i + 1} \genfrac(){0pt}{0}{\sigma}{i} (\sigma - i)^n$$
\end{corollary}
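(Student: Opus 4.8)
The plan is to derive the count of $n$-length $1$-subsequence universal words directly from Lemma~\ref{lem:0_universal} via complementation. The key observation, already stated in the text preceding the lemma, is that every word in $\Sigma^n$ is either $0$-subsequence universal or $1$-subsequence universal, and these two classes are disjoint: a word fails to be $1$-subsequence universal precisely when some symbol of $\Sigma$ is absent from it, which is exactly the definition of being $0$-subsequence universal (equivalently, not $1$-universal). Hence $\mathcal{U}(n,1,\sigma)$ is the complement, within $\Sigma^n$, of the set of $0$-subsequence universal words.

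First I would note that $|\Sigma^n| = \sigma^n$, since $\Sigma = \{1,2,\dots,\sigma\}$. Second, I would invoke Lemma~\ref{lem:0_universal} to substitute the expression $\sum_{i\in[1,\sigma]} (-1)^{i+1}\genfrac(){0pt}{0}{\sigma}{i}(\sigma-i)^n$ for the number of $0$-subsequence universal words. Subtracting the latter from the former yields exactly the claimed formula
$$\sigma^n - \sum\limits_{i \in [1,\sigma]} (-1)^{i + 1} \genfrac(){0pt}{0}{\sigma}{i} (\sigma - i)^n.$$

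There is essentially no technical obstacle here, since this is a corollary: the only point requiring a sentence of justification is the dichotomy "every word is $0$- or $1$-subsequence universal, but not both," which is immediate from Definition~\ref{def:k_universality}. If anything, the mild subtlety worth a remark is the degenerate term $i = \sigma$ in the sum, where $(\sigma - i)^n = 0^n = 0$ for $n \geq 1$, so it contributes nothing; this is consistent with the alphabet of size $0$ containing no words, and does not affect the identity. I would therefore write the proof as a short paragraph: state the partition of $\Sigma^n$, count $\Sigma^n$ as $\sigma^n$, quote Lemma~\ref{lem:0_universal} for the $0$-universal count, and conclude by subtraction.
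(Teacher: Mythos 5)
Your proposal is correct and matches the paper's (implicit) argument exactly: the paper proves this corollary by the same complementation of Lemma~\ref{lem:0_universal} within $\Sigma^n$, as sketched in the paragraph preceding that lemma. Your added remark about the vanishing $i=\sigma$ term is harmless and consistent with the identity.
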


\begin{corollary}
    \label{col:arches}
    The number of $n$-length Arches over an alphabet $\Sigma = \{1,2, \dots, \sigma\}$ is given by:
    $$\sigma(\sigma - 1)^{n - 1} - \sum\limits_{i \in [2,\sigma]} (-1)^{i} i\genfrac(){0pt}{0}{\sigma}{i} (\sigma - i)^{n - 1}.$$
\end{corollary}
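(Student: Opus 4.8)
The plan is to reduce the count of length-$n$ arches to the count of length-$(n-1)$ $1$-subsequence universal words over a smaller alphabet, exactly as foreshadowed before the statement. As noted after Definition \ref{def:arch}, an arch is a $1$-subsequence universal word whose final symbol occurs only at the last position. Hence a length-$n$ arch $v$ is specified uniquely by two independent choices: its last symbol $v[n] = x$, which can be any of the $\sigma$ symbols of $\Sigma$; and its prefix $v[1,n-1]$, which must use only the $\sigma - 1$ symbols of $\Sigma \setminus \{x\}$ (because $x$ may not reappear) and must contain each of them at least once (because $v$ must be $1$-subsequence universal over all of $\Sigma$), i.e.\ $v[1,n-1]$ is a $1$-subsequence universal word over an alphabet of size $\sigma - 1$. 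Conversely every such pair gives an arch, and the $\sigma$ choices of final symbol are pairwise disjoint, so the number of length-$n$ arches is $\sigma$ times the number of length-$(n-1)$ $1$-subsequence universal words over an alphabet of size $\sigma - 1$.

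First I would substitute the alphabet size $\sigma - 1$ and length $n - 1$ into Corollary \ref{col:1_universal}, obtaining that the number of admissible prefixes is
$$(\sigma-1)^{n-1} - \sum_{i\in[1,\sigma-1]}(-1)^{i+1}\binom{\sigma-1}{i}(\sigma-1-i)^{n-1},$$
and multiplying this by $\sigma$. Next I would reindex the sum with $j = i+1$, so that $j$ ranges over $[2,\sigma]$, $(-1)^{i+1} = (-1)^{j}$, $(\sigma-1-i)^{n-1} = (\sigma-j)^{n-1}$, and $\binom{\sigma-1}{i} = \binom{\sigma-1}{j-1}$; the subtracted term then reads $\sigma\sum_{j\in[2,\sigma]}(-1)^{j}\binom{\sigma-1}{j-1}(\sigma-j)^{n-1}$. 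The final step is the elementary identity $\sigma\binom{\sigma-1}{j-1} = j\binom{\sigma}{j}$ (both sides equal $\frac{\sigma!}{(j-1)!\,(\sigma-j)!}$), which applied term by term converts the subtracted sum into $\sum_{j\in[2,\sigma]}(-1)^{j} j\binom{\sigma}{j}(\sigma-j)^{n-1}$, yielding precisely the claimed expression $\sigma(\sigma-1)^{n-1} - \sum_{i\in[2,\sigma]}(-1)^{i} i\binom{\sigma}{i}(\sigma-i)^{n-1}$.

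I do not anticipate a serious obstacle; the only care needed is in two places. The first is justifying that the prefix of an arch is confined to the $(\sigma-1)$-letter sub-alphabet that excludes the last symbol --- this is immediate from uniqueness of the final symbol of an arch. The second is the bookkeeping of the index shift together with the binomial identity; everything else is routine algebra. Alternatively one could bypass Corollary \ref{col:1_universal} and rerun the inclusion--exclusion argument of Lemma \ref{lem:0_universal} directly on the $(\sigma-1)$-letter prefix, but routing through the corollary is shorter.
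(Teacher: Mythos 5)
Your proof is correct and follows exactly the route the paper intends: the paper itself states, just before Lemma \ref{lem:0_universal}, that the number of $n$-length arches equals $\sigma$ times the number of $(n-1)$-length $1$-subsequence universal words over an alphabet of size $\sigma-1$, and leaves the corollary's algebra (the reindexing and the identity $\sigma\binom{\sigma-1}{j-1}=j\binom{\sigma}{j}$) implicit. Your write-up simply makes those omitted steps explicit, and they check out.
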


\subsection{Counting $k$-subsequence universal words}

To count $k$-subsequence universal words with an arbitrary value of $k$, we employ a recursive approach.
The high-level idea is to count the number of suffixes of $k$-subsequence universal words sharing a given prefix $v$.
Let $\mathcal{S}(v)$ be the set of words of length $n - \mid v \mid$ such that for every word $u \in \mathcal{S}(v)$, the word $v u$ is a $k$-subsequence universal word.
Let $Arch(v) = v_1, v_2, \dots, v_{\ell}, v'$ be the arch factorisation of $v$, or the set of the first $k$ arches of $v'$.
In order to count the size of $\mathcal{S}(v)$, we observe that every word $u \in \mathcal{S}(v)$ must contain a prefix $u'$ such that $v' u'$ is an arch and the suffix $u[|u'| + 1, |u|]$ must contain $k - \ell - 1$ arches.
Our recursive approach is based on the observation that the size of $\mathcal{S}(v)$ is equal to the size of $\bigcup_{x \in \Sigma} \mathcal{S}(v x)$.
This leaves two major problems: determining whether or not the set $\mathcal{S}(v x)$ is empty, and ensuring that the total size of $\mathcal{S}(v)$ can be computed without having to explicitly check $\mathcal{S}(v w)$ for every suffix $w \in \Sigma^{n - |v|}$.

We solve these problems by introducing a new function, $\CountSuffixes(q, m, c)$ (\textbf{C}ount \textbf{S}uffixes) such that $|\mathcal{S}(v)| = \CountSuffixes(q, m, c)$ where:
\begin{itemize}
    \item $q$ is the number of unique symbols in $v'$.
    \item $m$ is the number of free symbols in every word $u\in \mathcal{S}(v)$, i.e. the number of symbols in $u$ that do not belong to any universal subsequence of the first $k - \ell - 1$ arches of $u[|u'| + 1, |u|]$ or in the universal subsequence of $u'$ in the word $v' u'$.
    \item $c$ is the minimum number of arches in $v' u$, equal to $k - \ell$.
\end{itemize}
The value of $\CountSuffixes(q,m,c)$ is determined in a recursive manner.
We first provide the base cases.
If $m = 0$, then every remaining symbol must be in the universal subsequence for one of the remaining $c$ arches, giving $\CountSuffixes(q, 0, c) = (\sigma - q)!(\sigma!)^{c - 1}$.
On the other hand, if $c = 0$, then the remaining symbols can be chosen arbitrarily from $\Sigma$, giving $\CountSuffixes(q,m,0) = \sigma^m$.
Assuming both $c$ and $m$ are greater than $0$, then the value of $\CountSuffixes(q,m,k)$ is determined recursively.
If $q = \sigma$, then the next symbol must be the first symbol of the $(k - c)^{th}$ arch of the word.
As there are $\sigma$ such possible symbols, followed by one of  $\CountSuffixes(1, m, c - 1)$ suffixes, the value of $\CountSuffixes(\sigma, m, c)$ is $\sigma  \CountSuffixes(1, m, c - 1)$.
Otherwise, the next symbol can either be one of the $q$ symbols already in the universal subsequence of the current arch, or one of the $\sigma - q$ symbols not in the universal subsequence, giving $\CountSuffixes(q, m, c) =  (\sigma - q)\CountSuffixes(q + 1, m, c) + q \CountSuffixes(q, m - 1, c)$.
Putting this together, the function $\CountSuffixes(q,m,c)$ can be defined as:


\[
	\CountSuffixes(q, m, c) = \begin{cases}
		(\sigma - q)! (\sigma!)^{c - 1} & m = 0\\
		\sigma^{m} & c = 0\\
		\sigma \CountSuffixes(1, m, c - 1) & q = \sigma, m > 0, c > 0\\
		(\sigma - q) \CountSuffixes(q + 1, m, c) + q \CountSuffixes(q, m - 1, c) & q > 0, m > 0, c > 0
	\end{cases}
\]

\begin{lemma}
    \label{lem:suffixToCountSuffix}
    Let $\mathcal{S}(v)$ denote the set of suffixes such that for every $u \in \mathcal{S}(v)$, the word $v u$ is an $n$-length $k$-subsequence word.
    Further, let $\ell$ be the number of complete arches in $v$, and let $v'$ be the suffix of $v$ such that $Arch(v) = v_1 v_2 \dots v_{\ell - 1} v_{\ell} v' = v$ where $v_i$ is the $i^{th}$ arch in the arch factorisation for $v$.
    Then, the size of $\mathcal{S}(v)$ is equal to $\CountSuffixes(q, m, k - \ell)$ where $q$ is the number of unique symbols in $v'$, and $m = n - (|v| + \sigma (k - \ell - 1))$.
\end{lemma}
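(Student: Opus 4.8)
The plan is to prove the identity in two moves: first reduce the statement about extensions of $v$ to an equivalent statement about extensions of the shorter word $v'$, and then recognise $\CountSuffixes$ as the value of a recursion that builds such an extension one symbol at a time.

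\textbf{Reducing to $v'$.} Since the arch‑factorisation is built greedily from the left, for every $u$ the first $\ell$ arches of $vu$ are exactly $v_1,\dots,v_\ell$ (the $i$‑th arch is the shortest factor of $vu$ starting where $v_{i-1}$ ended that contains all of $\Sigma$, and for $i\le\ell$ this lies inside $v$), while the remaining arches of $vu$ are precisely the arches of $v'u$ (when $\ell=0$ this is vacuous: $v'=v$). Hence, by Theorem~\ref{thm:arch}, $vu\in\mathcal U(n,k,\sigma)$ if and only if $|u|=n-|v|$ and $v'u$ has at least $c:=k-\ell$ arches, i.e.\ $\mathcal S(v)=\{\,u\in\Sigma^{\,n-|v|}: v'u\text{ has at least }c\text{ arches}\,\}$.

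\textbf{A uniform count of free positions.} For $u\in\mathcal S(v)$ write $u=u^{(1)}u^{(2)}\cdots u^{(c)}t$, where $v'u^{(1)}$ is the first arch of $v'u$, each $u^{(j)}$ for $2\le j\le c$ is the $j$‑th arch, and $t$ is the (possibly empty, possibly arch‑containing, but membership‑irrelevant) tail. The universal subsequence of $v'u^{(1)}$ occupies $q$ positions of $v'$ and $\sigma-q$ positions of $u^{(1)}$ — the first occurrences of the $\sigma-q$ symbols missing from $v'$, the last of which is the final position of $u^{(1)}$ — and each later arch contributes $\sigma$ positions. Calling every remaining position of $u$ \emph{free}, the number of free positions equals $(n-|v|)-(\sigma-q)-\sigma(c-1)$, which depends only on $n,|v|,\sigma,q,c$; this is $m$. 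Thus $q,m,c$ are exactly the quantities in the statement, and the point of this step is that $m$ is the same for every $u\in\mathcal S(v)$.

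\textbf{The recursion.} I would prove $|\mathcal S(v)|=\CountSuffixes(q,m,c)$ by induction on $m+(\sigma-q)+\sigma(c-1)$ — the number of positions of $u$ still to be fixed — choosing the symbols of $u$ left to right and maintaining the triple (distinct symbols seen in the arch currently being completed; free positions still to place; arches still to complete), started at $(q,m,c)$. If $m=0$, the current arch is finished by an ordering of its $\sigma-q$ missing symbols and each of the remaining $c-1$ arches by an ordering of all $\sigma$ symbols, giving $(\sigma-q)!(\sigma!)^{c-1}$; if $c=0$, the remaining $m$ positions are unconstrained, giving $\sigma^m$; these are the two base cases. Otherwise, if $q<\sigma$ the next symbol is either one of the $\sigma-q$ not‑yet‑seen symbols of the current arch (new state $(q+1,m,c)$) or one of the $q$ seen symbols, a free position (new state $(q,m-1,c)$), matching the fourth case; and if $q=\sigma$ the current arch is complete, so the next symbol opens a new arch — any of $\sigma$ choices — leaving state $(1,m,c-1)$, matching the third case. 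Each branch corresponds to a position of $u$ whose place in the decomposition $u^{(1)},\dots,u^{(c)},t$ is uniquely determined, so every $u\in\mathcal S(v)$ (including those whose tail forms extra arches) is produced exactly once, and the inductive step is just this partition.

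\textbf{Main obstacle.} The delicate part is the free‑position bookkeeping: verifying that $m$ is genuinely constant over $\mathcal S(v)$, that placing a repeated symbol while the current arch is unfinished always yields a free position rather than closing an arch prematurely, and that the $q=\sigma$ transition accounts for the opening symbol of the next arch without overcounting — in particular the boundary case $c=1$, where finishing the last required arch must leave exactly the $m$ free positions (and nothing forced) still to be filled.
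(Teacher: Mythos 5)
Your proof follows essentially the same route as the paper's: both establish the identity by conditioning on the next symbol of the suffix and tracking the state triple (distinct symbols seen in the current arch, free positions remaining, arches remaining), with identical base cases and the identical split into the $q=\sigma$, new-symbol, and repeated-symbol branches. Your write-up is somewhat more careful, making explicit the reduction from $vu$ to $v'u$ and the verification that the free-position count is uniform over $\mathcal{S}(v)$, both of which the paper takes for granted.

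One point to flag: your bookkeeping gives $m=(n-|v|)-(\sigma-q)-\sigma(k-\ell-1)$ free positions, whereas the lemma as stated sets $m=n-(|v|+\sigma(k-\ell-1))$; the two differ by $\sigma-q$. Your value is the one consistent with the $\CountSuffixes$ recursion (each step must decrease the remaining suffix length $m+(\sigma-q)+\sigma(c-1)$ by exactly one, and the base case $m=0$ counts suffixes of length $(\sigma-q)+\sigma(c-1)$) and with the paper's own later evaluation of $|\mathcal{S}(\emptyword)|$ as $\sigma\CountSuffixes(1,n-k\sigma,k)$. So the discrepancy is a defect in the statement (the paper's proof itself uses yet another expression, $n-(|v|+1)-(q+\sigma(k-\ell))$, for $m'$) rather than in your argument, but you should say explicitly that you are proving the lemma with this corrected value of $m$.
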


\begin{proof}
    We will assume, for notational simplicity, that if $v' = \emptyword$, then $q = \sigma$.
    We start with the base cases.
    If $m = 0$, every symbol in the suffixes of $\mathcal{S}(v)$ must be in the universal subsequence of one of the remaining arches.
    As there are $q$ symbols in $v'$, there are $(\sigma - q)!$ possible ways of extending $v'$ to become an arch, and $\sigma!$ arches of length $\sigma$, the total number of suffixes in $\mathcal{S}(v)$ is  $(\sigma - q)! (\sigma!)^{k - \ell - 1} = (\sigma - q)(\sigma!)^{c - 1}$.
    Alternatively, if $\ell \geq k$ (meaning that $v$ is already a $k$-subsequence universal word), then every remaining symbol in the suffix is a free symbol, and as such there are no constraints on the contents of the suffix.
    Therefore in this case, there are $\sigma^m$ suffixes in $\mathcal{S}(v)$.
    Note that in the case $m = 0$ and $\ell \geq k$, both of these formulae return 0, corresponding to the empty word.

    In the general case, assume that the size of $\mathcal{S}(v x)$ is equal to $\CountSuffixes(q', m', c')$, where $q'$ is the number of unique symbols in $v' x$, $m'$ is equal to $n - (|v| + 1) - (q + \sigma(k - \ell))$, and $c'$ is $k - \ell$.
    Observe that the total number of suffixes in $\mathcal{S}(v)$ is equal to $\sum_{x \in \Sigma} |\mathcal{S}(v x)|$.
    If $q = \sigma$, then the next symbol must belong to the universal subsequence of the next arch, equal to the value of $\CountSuffixes(1, m, k - \ell - 1)$.
    If $x$ is one of the $q$ symbols that have already appeared in $v'$, then the size of $\mathcal{S}(v x)$ is $\CountSuffixes(q, m - 1, c)$.
    Otherwise, the size of $\mathcal{S}(v x)$ is $\CountSuffixes(q + 1, m, c)$.
    As there are $q$ unique symbols in $v'$, the number of suffixes is given by the sum $q \CountSuffixes(q, m - 1, c) + (\sigma - q) \CountSuffixes(q + 1, m, c)$.
    Hence the size of $\mathcal{S}(v)$ is given by $\CountSuffixes(q, m, k - \ell)$.
\end{proof}

\begin{lemma}
	\label{lem:countsuffixes}
 	The values of $\CountSuffixes(q,m,c)$ can be computed for every $q \in [1,\sigma], m \in [0,n], c \in [0,k]$ in $O(n k \sigma)$ time.
\end{lemma}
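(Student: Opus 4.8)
The plan is to use dynamic programming: the recursive definition of $\CountSuffixes(q,m,c)$ given just before the lemma statement already gives a recurrence, so the task is to bound the number of distinct table entries and the cost of filling each one. First I would observe that there are at most $(\sigma) \cdot (n+1) \cdot (k+1) = O(nk\sigma)$ triples $(q,m,c)$ with $q \in [1,\sigma]$, $m \in [0,n]$, $c \in [0,k]$, so it suffices to show each entry is computed in $O(1)$ amortised time from entries already in the table, plus a small amount of precomputation for the closed-form base-case values.

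Next I would handle the base cases. The entry $\CountSuffixes(q,0,c) = (\sigma-q)!\,(\sigma!)^{c-1}$ and $\CountSuffixes(q,m,0) = \sigma^m$ must be available in $O(1)$ per lookup. For this I would precompute, in a one-time pass, the arrays of factorials $(\sigma-q)!$ for $q \in [0,\sigma]$ (cost $O(\sigma)$ arithmetic operations, each involving one factor of size at most $\sigma$), the powers $(\sigma!)^{c-1}$ for $c \in [0,k]$ (cost $O(k)$ multiplications), and the powers $\sigma^m$ for $m \in [0,n]$ (cost $O(n)$ multiplications). These precomputations are all dominated by $O(nk\sigma)$. (This is also where I would flag the computational-model remark from the preliminaries: multiplications here always have at least one operand of size $O(\sigma)$, or are products/sums of numbers of size at most $\sigma^n$, so each arithmetic step is charged the stated unit cost; I would note this explicitly so the complexity statement is consistent with the model.)

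Then I would fill the table for $m > 0$ and $c > 0$ in order of increasing $m$, and for fixed $m$ in order of increasing $c$, and for fixed $(m,c)$ in order of \emph{decreasing} $q$ from $\sigma$ down to $1$. The case $q = \sigma$ uses $\CountSuffixes(1,m,c-1)$, which lies in an already-completed $c$-layer; the case $q < \sigma$ uses $\CountSuffixes(q+1,m,c)$ (same $m,c$, larger $q$, already computed because we descend in $q$) and $\CountSuffixes(q,m-1,c)$ (smaller $m$, already computed). Each such entry is one multiplication by a number of size at most $\sigma$ plus one addition, hence $O(1)$ in the model. Summing over all $O(nk\sigma)$ entries gives the claimed bound.

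The main obstacle — really the only nontrivial point — is making sure the dependency order is actually acyclic and that every entry referenced on the right-hand side has already been assigned a value when it is needed; the subtlety is the mixed recursion that decreases $q$ in one branch (toward $q=1$) but jumps from $q=1$ back up to $q=\sigma$ via a decrease in $c$, so the evaluation order must interleave the $q$-sweep with the $c$-layering correctly. I would spell out the three-level loop nesting (outer $m$ increasing, middle $c$ increasing, inner $q$ decreasing) and verify each of the three recursive cases points strictly backward in this order. Everything else is bookkeeping.
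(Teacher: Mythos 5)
Your proposal is correct and follows essentially the same route as the paper's proof: precompute the closed-form base-case quantities, then fill the $O(nk\sigma)$-entry table in an order (increasing $m$ and $c$, decreasing $q$, i.e.\ increasing $\sigma-q$) under which each of the three recursive cases refers only to already-computed entries, at $O(1)$ cost per entry. Your explicit verification of the dependency order and the remark about the computational model are slightly more detailed than the paper's version but do not constitute a different argument.
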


\begin{proof}
	The correctness follows for the arguments above.
	We assume that the values of  $(\sigma - q)! (\sigma!)^{c}$ have been precomputed for every $q \in [1,\sigma], c \in [0,k]$, requiring $O(k \sigma)$ time, and the values of $\sigma^m$ have been precomputed for every $m \in [0,n]$ requiring $O(n)$ time.
	To determine the time complexity, note that the value of $\CountSuffixes(q,m,c)$ can be computed in constant time assuming that the values of $\CountSuffixes(q + 1,m,c)$, $\CountSuffixes(1,m,c - 1)$, and $\CountSuffixes(q,m- 1,c)$ have already been computed.
	As the base cases of $c = 0$, and $m = 0$ can be computed in constant time, and for every other case the values of $m$ and $(\sigma - q) + \sigma c$ are monotonically decreasing, the values of $\CountSuffixes(q,m,c)$ can be computed for every $q \in [1,\sigma], m \in [0,n], c \in [0,k]$ in a dynamic manner, starting with the base cases, and proceeding in increasing value of $m$, $\sigma - q$ and $c$.
	We note that the order in which $m,q$ and $c$ are incremented is irrelevant provided $\CountSuffixes(q + 1,m,c)$, $\CountSuffixes(1,m,c - 1)$, and $\CountSuffixes(q,m- 1,c)$ are computed before $\CountSuffixes(q ,m,c)$.
	Therefore, the total time complexity of computing the values of $\CountSuffixes(q,m,c)$ for every $q \in [1,\sigma], m \in [0,n], c \in [0,k]$, is $O(n k \sigma)$.
\end{proof}

Note that the number of $k$-subsequence universal words is equal to the number of words in the set $\mathcal{S}(\emptyword)$, i.e. the number of $n$-length words with an empty prefix.
Therefore, from Lemma \ref{lem:countsuffixes}, it follows that the size of $\mathcal{U}(n,k,\sigma)$ can be computed by counting the size of $\mathcal{S}(\emptyword)$, equivalent to evaluating $\sigma \CountSuffixes(1,n - (k \sigma), k)$.
Theorem \ref{thm:counting} follows from this observation.

\begin{theorem}
    \label{thm:counting}
    The size of $\mathcal{U}(n, k, \sigma)$ can be computed in $O(n k \sigma)$ time.
\end{theorem}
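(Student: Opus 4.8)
The plan is to assemble the two preceding lemmas. The key observation is that $\mathcal{U}(n,k,\sigma)$ is exactly $\mathcal{S}(\emptyword)$: by definition a word $u \in \Sigma^n$ belongs to $\mathcal{S}(\emptyword)$ iff $\emptyword\, u = u$ is a $k$-subsequence universal word of length $n$, which is precisely the membership condition for $\mathcal{U}(n,k,\sigma)$. Hence it suffices to compute $|\mathcal{S}(\emptyword)|$. I would first dispose of the trivial case $k = 0$: then $\mathcal{U}(n,0,\sigma) = \Sigma^n$ and the answer $\sigma^n$ is produced by repeated squaring in $O(\log n)$ multiplications.

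For $k \geq 1$, I would apply Lemma~\ref{lem:suffixToCountSuffix} to the empty prefix $v = \emptyword$, which---after the bookkeeping for the empty archless suffix recorded in that lemma's proof and in the paragraph preceding the theorem---identifies $|\mathcal{S}(\emptyword)|$ with the single value $\sigma\,\CountSuffixes(1,\, n - k\sigma,\, k)$. I would then check that this is a legitimate entry of the table produced by Lemma~\ref{lem:countsuffixes}: the hypothesis $n \geq k\sigma$ gives $n - k\sigma \in [0,n]$, while $1 \in [1,\sigma]$ and $k \in [0,k]$; in particular this also witnesses that $\mathcal{U}(n,k,\sigma)$ is nonempty. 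By Lemma~\ref{lem:countsuffixes} all values $\CountSuffixes(q,m,c)$ with $q \in [1,\sigma]$, $m \in [0,n]$, $c \in [0,k]$ are computed in $O(nk\sigma)$ time---this already includes precomputing the auxiliary quantities $(\sigma - q)!(\sigma!)^{c}$ and $\sigma^{m}$ used by the base cases---so the required entry is on hand, and multiplying it by $\sigma$ and returning the result costs $O(1)$ more. Summing, $|\mathcal{U}(n,k,\sigma)|$ is obtained in $O(nk\sigma)$ time.

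The combinatorial substance of the argument is entirely carried by Lemmas~\ref{lem:suffixToCountSuffix} and~\ref{lem:countsuffixes}, which we are free to assume, so there is no genuinely hard step remaining. The only point demanding care is identifying which cell of the $\CountSuffixes$ table the empty prefix maps to---the handling of the empty archless suffix, and the associated conventions on $q$ and on the count of still-needed arches, is where an off-by-one is easiest to slip in---together with the routine verification that $n \geq k\sigma$ keeps that cell within the range over which Lemma~\ref{lem:countsuffixes} tabulates, so that no work beyond the stated $O(nk\sigma)$ preprocessing is incurred.
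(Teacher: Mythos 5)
Your proposal matches the paper's argument: the paper likewise identifies $\mathcal{U}(n,k,\sigma)$ with $\mathcal{S}(\emptyword)$, evaluates it as $\sigma\,\CountSuffixes(1, n - k\sigma, k)$ via Lemma~\ref{lem:suffixToCountSuffix}, and charges the whole cost to the $O(nk\sigma)$ table construction of Lemma~\ref{lem:countsuffixes}. Your extra remarks (the $k=0$ case and the check that $n \geq k\sigma$ keeps the relevant cell in range) are harmless additions to essentially the same proof.
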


\section{Ranking}
\label{sec:ranking}

Using the counting techniques outlined in Section \ref{sec:counting}, we can now rank a given word $w \in \Sigma^n$ amongst the set of $n$-length $k$-subsequence universal words.
This is done in an iterative manner.
For each $i \in [1,n]$, we count the number of words of $n$-length $k$-subsequence universal words with the prefix $w[1,i - 1] x$, where $x$ is some symbol lexicographically smaller than $w[i]$.
Taking the sum of such words for every $i \in [1,n]$ gives the total number of $n$-length $k$-subsequence universal words that are lexicographically smaller than $w$.
By taking the sum of such words for each prefix, the total number of $n$-length $k$-subsequence universal words that are lexicographically smaller than $w$ can be computed.

Let $Arch(w) = w_1, w_2, \dots, w_m v$.
The first key observation is that given any word $u = w_1 w_2 \dots w_i u'$, for some $i \leq k$, $u$ is $k$-subsequence universal if and only if $u'$ is $(k - i)$-subsequence universal.
Secondly, given a word $s = w_1 w_2 \dots w_i[1:j] s'$, $s$ is $k$-subsequence universal if and only if $w_i[1:j] s'$ is $(k - i + 1)$-subsequence universal.

\paragraph*{Preprocessing.}
In order to make our ranking algorithm more efficient, we first provide an overview of the preprocessing that is performed before the main ranking algorithm.
Let $Arch(w) = w_1, w_2, \dots, w_k, v$.
Using the notation from Observation \ref{obs:alph_and_arches}, let $A_1, A_2, \dots, A_k$ be the indices such that $A_{\ell}$ corresponds to the first position in $w$ at which the arch $w_{\ell}$ appears in $w$, i.e. $w_{\ell} = w[A_{\ell}, A_{\ell + 1} - 1]$.
Further, let $\Delta(w,A_{\ell},i)$ be the number of unique symbols in the $i$-length prefix of $w_{\ell}$. 
We assume that the values of $\Delta(w,A_{\ell},i_{\ell})$ have been computed for every $i \in [A_{\ell},A_{\ell + 1} -1]$, $\ell \in [0,k]$.

In order to count the number of free symbols within each suffix of $w$, let $m$ be an $n$-length array such that $m[i]$ contains the number of free symbols in the $i$-length suffix of $w$.
The values of $m[i]$ are computed by starting with $i = n$, and working in decreasing value of $i$.
Note that the value of $m[n]$ is equal to $0$.
In the general case, the value of $m[i]$ is either $m[i + 1]$, if $w[i]$ belongs to the universal subsequence of some arch in the arch decomposition, of $m[i + 1] + 1$ otherwise.
Letting $\ell$ be the index such that $A_{\ell} \leq i < A_{\ell + 1}$, note that if $w[i]$ is in the universal subsequence of $w_{\ell}$, then $\Delta(w,A_{\ell}, i) = \Delta(w,A_{\ell}, i - 1) + 1$, otherwise $\Delta(w,A_{\ell}, i) = \Delta(w,A_{\ell}, i - 1)$.
Hence using the previous computation, the values of $m[i]$ can be determined in $O(n)$ time for every $i \in [1,n]$.

In order to determine the number of symbols smaller than $w[i]$, an additional $n$-length array $l$ such that $l[i]$ contains the set of symbols that appear between $A_{\ell}$ and $i$ in $w$, where $\ell$ is the index such that $A_{\ell} \leq i < A_{\ell + 1}$.
This complements $m$ by ensuring providing a quick method of checking if a given symbol $x$ has already been used by $w_{\ell}[1, i + 1 - A_{\ell}]$.
The array $l$ is computed in $O(n \sigma)$ time as follows.
For each $i \in [1,n]$, note that the value of $l[i]$ is either $l[i - 1] \cup \{w[i]\}$, if $i \neq A_{\ell}$ for every $\ell \in [0,k]$, or $\{w[i]\}$ otherwise.
By storing each array as a $\sigma$-length binary vector, requiring at most $O(n \sigma)$ time to initialise, the values of $l[i]$ can be computed for every $i \in [1,n]$ in $O(n)$ time.
Finally, we assume that the value of $\CountSuffixes(q,m,c)$ has been precomputed for every $q \in [1, \sigma], m \in [n]$ and $c \in [k]$.

\paragraph*{Ranking.}
We now have the tools we need to rank the input word $w \in \Sigma^n$.
We note that $w$ does not have to be a $k$-subsequence universal word, allowing this tool to be used in a more general setting.
At a high level, our approach is to take each prefix of $w$, $w[1,i]$, and count the number of words in $\mathcal{U}(k,n,\sigma)$ that are lexicographically smaller than $w$ with the prefix $w[1,i] x$, where $x < w[i + 1]$.
By taking the sum of such words for each prefix of $w$, the total number of words smaller than $w$ can be determined.

Let $\mathcal{R}(i) = \{u \in \mathcal{U}(k,n,\sigma) \mid u < w, u[1,i] = w[1,i], u[i + 1] < w[i + 1]\}$, and let $\ell$ be the index such that $A_{\ell} \leq i < A_{\ell + 1}$.
Note that the number of possible values for the symbol $u[i + 1]$ is equal to $w[i + 1] -1$.
Further, the number of words in $\mathcal{R}(i)$ with the prefix $w[1,i] x$ for some fixed $x < w[i + 1]$ is equal to either $\CountSuffixes(\Delta(w, A_{\ell},i) + 1, m[i], k + 1 - \ell )$, if $x$ is in the universal subsequence of $w_{\ell}$ or $\CountSuffixes(\Delta(w, A_{\ell},i), m[i] - 1, k + 1 - \ell )$ otherwise.
Recall that the array $l$ contains at position $i$ the set of unique symbols in the factor of $w$ $w[A_{\ell}, i]$.
Therefore, the size of $\mathcal{R}(i)$ can be computed with this following sum:

\[|\mathcal{R}(i)| =
    \sum_{x \in [1,w[i + 1] - 1]}\begin{cases}
        \CountSuffixes(\Delta(w, A_{\ell},i) + 1, m[i], k + 1 - \ell ) & x \in l[i]\\
        \CountSuffixes(\Delta(w, A_{\ell},i), m[i] - 1, k + 1 - \ell ) & x \notin l[i]
    \end{cases}
\]

Using $\mathcal{R}(i)$, rank of $w$ in the set $\mathcal{U}(n,k,\sigma)$, denoted $rank(w)$, is given by:

\[
rank(w) = \sum_{i \in [0, n - 1]} |\mathcal{R}(i)|
\]

\begin{theorem}
    The rank of a given word $w \in \Sigma^n$ can be determined in $O(n k \sigma)$ time.
\end{theorem}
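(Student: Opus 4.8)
The plan is to treat the displayed identity $rank(w)=\sum_{i\in[0,n-1]}|\mathcal{R}(i)|$ as the algorithm itself, and to argue separately (i) that it evaluates to the rank of $w$ and (ii) that, together with the preprocessing, it can be evaluated in $O(nk\sigma)$ time; the running-time claim is what the theorem asks for, and it is the easier of the two once the bookkeeping is in place.

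For correctness I would first check that the sets $\mathcal{R}(i)$, $i\in[0,n-1]$, partition $\{u\in\mathcal{U}(n,k,\sigma):u<w\}$: any such $u$ has a unique least index $i+1$ with $u[1,i]=w[1,i]$ and $u[i+1]<w[i+1]$, so it lies in $\mathcal{R}(i)$ and in no other $\mathcal{R}(j)$, and conversely every member of $\mathcal{R}(i)$ is lexicographically below $w$. Then, for fixed $i$ and fixed $x<w[i+1]$, the words of $\mathcal{U}(n,k,\sigma)$ with prefix $w[1,i]x$ are exactly $\{w[1,i]x\}\cdot\mathcal{S}(w[1,i]x)$, so by Lemma~\ref{lem:suffixToCountSuffix} their number is a single value $\CountSuffixes(q,m',c)$. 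The two observations about arch factorisations recorded just before the preprocessing paragraph — that prepending a complete arch lowers the required universality index by one, and that a trailing partial arch behaves the same way — let me read off $c$ from the index $\ell$ with $A_\ell\leq i<A_{\ell+1}$, and let me read off $q$ and $m'$ from whether $x$ enlarges the alphabet of the current partial arch $w[A_\ell,i]$; the arrays $\Delta(w,A_\ell,\cdot)$, $m[\cdot]$ and $l[\cdot]$ computed in preprocessing are precisely what is needed to decide this in $O(1)$ and to supply the two remaining arguments. Summing over $x\in[1,w[i+1]-1]$ yields the displayed expression for $|\mathcal{R}(i)|$, and summing over $i$ yields $rank(w)$.

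For the running time I would bill the preprocessing and the main loop separately. Building the full table of values $\CountSuffixes(q,m,c)$ over $q\in[1,\sigma]$, $m\in[0,n]$, $c\in[0,k]$ takes $O(nk\sigma)$ time and space by Lemma~\ref{lem:countsuffixes}; the arch factorisation of $w$ and all values $\Delta(w,A_\ell,i)$ are obtained in $O(n)$ time by Theorem~\ref{thm:arch} and Observation~\ref{obs:alph_and_arches}; the array $m$ is filled in $O(n)$ time by the stated backward recurrence; and the array $l$, kept as $n$ length-$\sigma$ bit vectors, is initialised and filled in $O(n\sigma)$ time. The main loop then runs over $i\in[0,n-1]$ and, for each $i$, computes $|\mathcal{R}(i)|$ as a sum of at most $w[i+1]-1\leq\sigma-1$ summands, each of which is one lookup into the $\CountSuffixes$ table plus $O(1)$ lookups of $\Delta(w,A_\ell,i)$, $m[i]$ and the bit ``$x\in l[i]$''. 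Hence each $|\mathcal{R}(i)|$ costs $O(\sigma)$ time and the loop costs $O(n\sigma)$; adding the preprocessing, the total is $O(nk\sigma)$. (Grouping the two cases of the summand would cut the loop to $O(n)$, but this does not change the bound.)

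I expect the only real difficulty to lie in the correctness bookkeeping of the case split for $|\mathcal{R}(i)|$, not in the complexity count. One must be careful about the off-by-one between ``position $i$ lies inside arch $w_\ell$'' and ``$w[1,i]x$ contains $\ell-1$ complete arches plus a trailing partial arch''; about the situation where $i$ lies in the arch-free suffix $v$ of $w$; about the boundary case in which $w[1,i]x$ already contains $k$ or more arches, where $c$ reaches $0$ and every remaining position becomes free, matching the $\CountSuffixes(\cdot,\cdot,0)=\sigma^{m}$ branch; and about the fact that the input $w$ need not itself be $k$-subsequence universal, so that some of the nominal arches $w_{\ell+1},\dots,w_k$ may be empty. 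Each of these is exactly what the definitions of $m[\cdot]$, $l[\cdot]$ and the $\Delta$-values in the preprocessing were designed to absorb, so once the correspondence with $\CountSuffixes(q,m',c)$ has been verified, the running-time statement follows immediately from the bounds above.
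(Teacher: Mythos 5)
Your proposal is correct and follows essentially the same route as the paper's own proof: partition the words below $w$ by the first position of disagreement, evaluate each $|\mathcal{R}(i)|$ via $O(\sigma)$ lookups into the precomputed $\CountSuffixes$ table using the arrays $\Delta$, $m$ and $l$, and charge the $O(nk\sigma)$ cost to the table construction. Your write-up is in fact more careful than the paper's (the explicit partition argument and the off-by-one/edge-case caveats are left implicit there), but it is the same algorithm and the same complexity accounting.
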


\begin{proof}
    Observe that for any word of the form $w[i] x v$ to be $k$-subsequence universal, the suffix $v$ must belong to $\mathcal{S}(w[i] x)$.
    Let $\ell$ be the index such that $A_{\ell} \leq i < A_{\ell + 1}$, $q$ be the number of unique symbols in $w[A_{\ell}, i]$ and $m$ the number of free symbols following $w[1,i] x$.
    Note that the number of possible values of $v$ is either $\CountSuffixes(q + 1, m, k - \ell)$, if $x$ is not in the universal subsequence of $w[A_{\ell}, i]$, or $\CountSuffixes(q, m - 1, k - \ell)$ if $x$ has already appeared in $w[A_{\ell}, i]$.
    Using the list $l$, it can be determined in constant time if the symbol $x$ appears in $w[A_{\ell}, i]$.
    By extension, the total number of $n$-length $k$-subsequence universal words with the prefix $w[i] x$ can be computed in constant time, assuming that the values of $\CountSuffixes(q,m,k - \ell)$ has been precomputed, and hence the value of $\mathcal{R}(i)$ can be computed in $O(\sigma)$ time.
    As there are $n$ possible prefixes of $w$, the total rank of $w$ within $\mathcal{U}(n,k,\sigma)$ can be computed in $O(n \sigma)$ time after $O(n \sigma k)$ preprocessing.
\end{proof}

\section{Unranking}
\label{sec:unranking}

We complement our counting and ranking techniques by showing how to unrank $n$-length $k$-subsequence universal words.
Note that an efficient unranking technique may be used as an effective tool to enumerate the set of all $k$-subsequence universal words.
We assume that the values of $\CountSuffixes(q,m,k)$ have been precomputed for every $q \in [1,\sigma], m \in [0,n]$, and $c \in [0,k]$.

Our unranking processes operates in an iterative manner.
Let $w$ be the word of rank $i$ that is being unranked.
Starting with $j = 1$, the value of $w[j]$ is computed by counting the number of $n$-length $k$-subsequence universal words with the prefix $w[1,j - 1] x$, for $x \in \Sigma$ starting with $x = 1$.
The value of $x$ is increased until the number of words with a prefix smaller than or equal to $w[1,j - 1] x$ is greater than $i$.
Once this value of $x$ has been computed, $w[j]$ is set to $x - 1$, and the algorithm proceeds to compute the value of $w[j + 1]$.

\begin{theorem}
    \label{thm:unranking}
    The $k$-subsequence universal word $w$ of length $n$ with a rank of $i$ can be determined in $O(n \sigma + n k \sigma)$ time.
\end{theorem}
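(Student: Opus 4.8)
The plan is to realize the iterative unranking procedure described just above the theorem statement, and bound the cost of each of its $n$ rounds. The key structural fact, already used in the ranking section, is that for any prefix $v$ of length $j-1$ with $\ell$ complete arches and a trailing partial arch $v'$ containing $q$ unique symbols, the number of $k$-subsequence universal completions of $v$ is exactly $\CountSuffixes(q, m, k-\ell)$, where $m$ is the number of free symbols remaining. Moreover, appending a symbol $x$ updates this data in constant time: if $x$ is a symbol already appearing in $v'$ (checkable in $O(1)$ using the binary-vector list $l$ maintained incrementally), then $q$ stays the same and $m$ decreases by one; if $x$ is new, $q$ increases by one (and if this makes $q=\sigma$, a new arch has completed, so $\ell$ increments and $q$ resets to $1$ for the next round); and the free-symbol budget $m$ is adjusted accordingly. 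Thus the count of completions with prefix $v x$ is $\CountSuffixes(q+1, m, k-\ell)$ or $\CountSuffixes(q, m-1, k-\ell)$, each a single table lookup.

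The procedure I would carry out: maintain a running residual rank $r$, initialized to $i$, together with the triple $(\ell, q, m)$ for the current prefix. At step $j$ from $1$ to $n$, iterate $x$ from $1$ upward, and for each $x$ look up $N_x$, the number of $k$-subsequence universal words with prefix $w[1,j-1]\,x$, using the case distinction above. Accumulate these; as soon as the cumulative sum $N_1 + \dots + N_x$ is at least $r$, stop: set $w[j] = x$, subtract $N_1 + \dots + N_{x-1}$ from $r$, commit the corresponding update to $(\ell, q, m)$ and to the list $l$, and move to step $j+1$. Correctness follows by the standard unranking invariant — the surviving words with prefix $w[1,j]$ are contiguous in lexicographic order and $r$ is the (1-indexed) position of the target word within that block — which is exactly the induction underlying Lemma \ref{lem:suffixToCountSuffix}; since the procedure only ever recurses into a nonempty block (the target rank is at most the block size, which is why we need $n \ge k\sigma$ so that $\mathcal{U}(n,k,\sigma)$ is nonempty and the initial $r \le \sigma\,\CountSuffixes(1, n-k\sigma, k)$), every $\CountSuffixes$ call made is a value that has been precomputed.

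For the complexity: the preprocessing computes all values $\CountSuffixes(q,m,c)$ in $O(nk\sigma)$ time by Lemma \ref{lem:countsuffixes}, and initializes the $\sigma$-length binary vector $l$ and the factorials $(\sigma-q)!(\sigma!)^c$ as in the ranking section, all within $O(nk\sigma)$. In the main loop, step $j$ tries at most $\sigma$ values of $x$, each costing $O(1)$ for the table lookup and the membership test, so each step is $O(\sigma)$ and the $n$ steps total $O(n\sigma)$; committing the update to $l$ across all steps is $O(n)$ amortized (each step flips one bit, plus occasional $O(\sigma)$ resets when an arch completes, of which there are at most $k \le n/\sigma$, contributing $O(n)$). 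Hence the whole algorithm runs in $O(n\sigma + nk\sigma) = O(nk\sigma)$ time, matching the stated bound.

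The main obstacle I anticipate is bookkeeping the boundary cases cleanly rather than any deep difficulty: correctly handling the moment a partial arch closes (when appending a new symbol pushes $q$ to $\sigma$), so that $\ell$, $q$, and the remaining-arch argument $k-\ell$ are all updated consistently and $m$ — which in the ranking section is defined via $m[i]$ as a precomputed quantity but here must be maintained on the fly as the free-symbol budget $n - j - \sigma(k-\ell-1)$ — stays in sync; and guaranteeing that no $\CountSuffixes$ argument ever strays outside the precomputed ranges $q\in[1,\sigma]$, $m\in[0,n]$, $c\in[0,k]$, which the hypothesis $n \ge k\sigma$ together with the nonemptiness of each visited block ensures. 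Once these invariants are stated precisely, the argument is a routine induction on $j$.
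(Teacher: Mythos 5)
Your proposal follows essentially the same route as the paper: precompute the $\CountSuffixes$ table in $O(nk\sigma)$ time, then determine each symbol $w[j]$ by scanning the at most $\sigma$ candidate extensions and comparing accumulated block sizes against the (residual) rank, for $O(\sigma)$ work per position and $O(n\sigma)$ overall. Your version is, if anything, more explicit than the paper's about maintaining the state $(\ell,q,m)$ and the membership vector $l$ on the fly; the only loose point is the handling of $q$ at an arch boundary (the paper's convention is $q=\sigma$ for an empty partial arch, not $q=1$), a bookkeeping detail you already flag and which does not affect the argument.
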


\begin{proof}
    Starting with $w[1]$, note that the number of words with the prefix $x$, for any $x \in \Sigma$, is given by $\CountSuffixes(1, n - (k \sigma), k)$.
    Further, any word with the first symbol $x$ has a rank in the range $(x - 1) \CountSuffixes(1, n - (k \sigma), k) + 1$ to $x \CountSuffixes(1, n - (k \sigma), k)$.
    Therefore the value of $w[1]$ is the value of $x$ such that $(x - 1) \CountSuffixes(1, n - (k \sigma), k) < i \leq x\CountSuffixes(1, n - (k \sigma), k)$.

    More generally, let $t(j)$ be the smallest rank of words with the prefix $w[1,j]$, determined by the sum:
    \[
        t(j) = \sum\limits_{\ell \in [1,j]} \sum\limits_{x \in [1,w[\ell] - 1 ]} |\mathcal{S}(w[1,\ell - 1] x)  | = t(j - 1) + \sum_{x \in [1,w[j] - 1 ]} |\mathcal{S}(w[1,j - 1] x)|.
    \]
    Note that the value of $t(j)$ can therefore be computed in $O(\sigma)$ time using $t(j - 1)$ and the values of $\CountSuffixes(q,m,c)$.
    The value of $w[j + 1]$ is, therefore, the symbol $x$ such that $t(j) + \sum_{y \in [1,x - 1]} |\mathcal{S}(w[1,j] y)|< i \leq t(j) + \sum_{y \in [1,x]} |\mathcal{S}(w[1,j] y)|$, and further can be computed in $O(\sigma)$ time, giving the total time complexity of the unranking of $w$ as $O(n \sigma)$ after $O(n \sigma k)$ preprocessing.
\end{proof}

\begin{corollary}
    The set of $k$-subsequence universal words of length $n$ can be output explicitly with $O(n \sigma)$ delay after $O(nk\sigma)$ preprocessing.
\end{corollary}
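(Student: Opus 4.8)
The plan is to derive the enumeration result as an immediate consequence of the unranking algorithm from Theorem~\ref{thm:unranking}. The key observation is that enumerating $\mathcal{U}(n,k,\sigma)$ in lexicographic order is equivalent to unranking the words with ranks $1, 2, \dots, |\mathcal{U}(n,k,\sigma)|$ in sequence. First I would perform the $O(nk\sigma)$ preprocessing once: compute and store the table of values $\CountSuffixes(q,m,c)$ for every $q \in [1,\sigma]$, $m \in [0,n]$, $c \in [0,k]$, as justified by Lemma~\ref{lem:countsuffixes}, together with the precomputed powers $\sigma^m$ and factorial products $(\sigma - q)!(\sigma!)^c$. This preprocessing is shared across all words to be output, so it contributes only an additive $O(nk\sigma)$ term rather than a per-word cost.

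After preprocessing, I would iterate: maintain the current rank counter $r$, initialised to $1$, and repeatedly invoke the unranking procedure to produce the word of rank $r$, then increment $r$, stopping once $r$ exceeds the count $|\mathcal{U}(n,k,\sigma)|$ (which is itself available from the $\CountSuffixes$ table via $\sigma\,\CountSuffixes(1, n - k\sigma, k)$, by Theorem~\ref{thm:counting}). By the proof of Theorem~\ref{thm:unranking}, each individual unranking call runs in $O(n\sigma)$ time once the preprocessing is in place, since for each of the $n$ positions we scan at most $\sigma$ candidate symbols and perform a constant amount of arithmetic per candidate using the precomputed table. Hence the delay between consecutive outputs is $O(n\sigma)$, and the words are produced in strictly increasing lexicographic order, so every word of $\mathcal{U}(n,k,\sigma)$ is output exactly once.

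One subtlety worth addressing is that a naive re-run of the unranking routine from scratch for each consecutive rank recomputes prefix information that largely coincides between rank $r$ and rank $r+1$; however, this does not affect the stated bound, since each unranking call already costs only $O(n\sigma)$ and that is exactly the claimed delay. (If one wished to improve the delay further one could exploit the fact that consecutive words share a long common prefix and only update the suffix, but this is not needed here.) The other point to verify is the boundary condition $n \geq k\sigma$ required for $\mathcal{U}(n,k,\sigma)$ to be non-empty and for the arguments to $\CountSuffixes$ to be well-defined; this is part of the standing assumption on the input triple, so no extra case analysis is required.

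I do not expect a genuine obstacle here: the corollary is essentially a bookkeeping statement that packages the per-word unranking cost as an enumeration delay. The only thing that needs care is making explicit that the $O(nk\sigma)$ preprocessing is performed a single time and amortised over the whole enumeration, so that it appears as a one-off preprocessing cost and not inside the delay — which is precisely what the statement claims.
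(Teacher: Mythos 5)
Your proposal is correct and follows exactly the paper's own argument: perform the $O(nk\sigma)$ preprocessing once, then repeatedly invoke the unranking procedure of Theorem~\ref{thm:unranking} on ranks $1,2,\dots,|\mathcal{U}(n,k,\sigma)|$, each call costing $O(n\sigma)$ and hence bounding the delay. The extra details you supply (obtaining the total count from the table and noting the shared preprocessing is a one-off cost) are consistent with, and slightly more explicit than, the paper's two-sentence proof.
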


\begin{proof}
    Following Theorem \ref{thm:unranking}, each index $i \in [1,|\mathcal{U}(n, k, \sigma)|]$ can be unranked in $O(n \sigma)$ time after at most $O(n \sigma k)$ preprocessing.
    Hence the set $\mathcal{U}(n,\sigma,k)$ can be enumerated with $O(n \cdot \sigma)$ delay after $O(nk\sigma)$ preprocessing.
\end{proof}

\section{Conclusion}

In this paper, we provided new tools for understanding the space of $k$-subsequence universal words.
Notably, we have shown how to count, rank, unrank, and enumerate these words with efficient algorithms for words of fixed length.
We note that all of these algorithms can be extended to the setting of words of length at most $n$.
We see two key open questions asked in this paper.
First, if there is a general formula for counting the number of $n$-length $k$-subsequence universal words.
Indeed, such a formula may allow for a speed up for the preprocessing of the ranking, unranking, and enumeration algorithms, if it can be extended to count the size of $\mathcal{S}(v)$ efficiently.
Secondly, if there is an enumeration algorithm outputting every word in $\mathcal{U}(n,k,\sigma)$ with at most O$(n)$ delay after polynomial-time preprocessing.

The author thanks the Leverhulme Trust for funding this research via the Leverhulme Research Centre for Functional Materials Design.
Further, the author would like to thank the reviewers for their helpful comments that have improved the readability of this paper.

\printbibliography

\end{document}